\newcommand{\dd}{\mathrm{d}}
\newtheorem{theorem}{Theorem}[section]
\newtheorem{lemma}[theorem]{Lemma}
\newtheorem{remark}[theorem]{Remark}
\newtheorem*{remarks}{Remarks}
\newcommand{\R}{{\mathord{\mathbb R}}}
\begin{document}
\title[A New Estimate on the Two-Dimensional Indirect Coulomb Energy]{\textbf{A New Estimate on the Two-Dimensional Indirect Coulomb Energy}}

\author[Benguria]{Rafael D. Benguria$^1$}

\author[Gallegos]{Pablo Gallego$^2$}

\author[Tu\v{s}ek]{Mat\v{e}j Tu\v{s}ek$^3$}

\address{$^1$ Departmento de F\'\i sica, P. Universidad Cat\' olica de Chile,}
\email{\href{mailto: rbenguri@fis.puc.cl}{ rbenguri@fis.puc.cl}}

\address{$^2$ Departmento de F\'\i sica, P. Universidad Cat\' olica de Chile, }
\email{\href{mailto: pigalleg@uc.cl}{pigalleg@uc.cl}}

\address{$^2$ Departmento de F\'\i sica, P. Universidad Cat\' olica de Chile, }
\email{\href{mailto: tmatej@gmail.com}{tmatej@gmail.com}}

\maketitle

\smallskip
\begin{abstract}
We prove  a new lower bound on the indirect Coulomb energy in two dimensional quantum mechanics in terms of the single particle density of the system. The new universal lower bound is an alternative to the Lieb--Solovej--Yngvason  bound with a smaller constant, $C=(4/3)^{3/2}\sqrt{5 \pi -1} \approx 5.90  < C_{LSY}=
192\sqrt{2 \pi} \approx 481.27$, which also  involves an additive gradient energy term  of the single particle density. 
\end{abstract}

\section{Introduction}

Since the beginning of Quantum Mechanics there has been a wide interest in estimating various energy terms of a system of electrons in terms of the single particle density $\rho_{\psi}(x)$.  Given that the expectation value of the Coulomb attraction of the electrons by the nuclei can be expressed in closed form in terms of $\rho_{\psi}(x)$, the interest focuses on estimating the expectation value of the kinetic energy of the system of electrons and on  the expectation value of the Coulomb repulsion between the electrons.  
Here, we will be concerned with the latest. The most natural approximation to the expectation value of the Coulomb repulsion between the electrons is given by
\begin{equation}
D(\rho,\rho)=\frac{1}{2} \int \rho(x) \frac{1}{|x-y|} \rho(y) \, \dd x \, \dd y, 
\end{equation}
which is usually called the {\it direct term}. 
The remainder, i.e., the difference between the expectation value of the electronic repulsion and $D(\rho,\rho)$, say $E$, is called the {\it indirect term}. 
In 1930, Dirac \cite{Di30} gave the first approximation to the indirect Coulomb energy in terms of the single particle density. Using an argument with  plane waves, he approximated $E$ by 
\begin{equation}
E \approx -c_D e^{2/3} \int \rho^{4/3} \, dx,
\label{eq:dirac}
\end{equation}
where $c_D=(3/4)(3/\pi)^{1/3} \approx  0.7386$ (see, e.g., \cite{Mo06}, p. 299). Here $e$ denotes the absolute value of the charge of the electron. 
The first rigorous lower bound for $E$ was obtained by E.H. Lieb in 1979 \cite{Li79}, using the Hardy--Littlewood Maximal Function \cite{StWe71}. There he found that,
$E \geq -8.52 e^{2/3} \int \rho^{4/3} \, dx$.  The constant 8.52 was substantially improved by E.H. Lieb and S. Oxford in 1981 \cite{LiOx81}, who proved the bound
\begin{equation}
E  \ge  -C e^{2/3} \int \rho^{4/3} \, dx,
\label{eq:LO}
\end{equation}
with  $C = c_{LO}=1.68$.  The best value for $C$ is unknown, but Lieb and Oxford \cite{LiOx81} 
proved that it is larger or equal than $1.234$. The Lieb--Oxford value was later improved to $1.636$ by 
Chan and Handy, in 1999 \cite{ChHa99}.  It is this last constant, as far as we know, that is the smallest value for $C$ that has been found to this day. During the last thirty years, after the work of Lieb and Oxford \cite{LiOx81},  there has been a special interest in quantum chemistry in constructing corrections to the Lieb--Oxford term involving the gradient of the single particle density. This interest arises with the expectation that states with a relatively small kinetic energy have a smaller indirect part (see, e.g., \cite{LePe93,PeBuEr96,VeMeTr09} and references therein). Recently, Benguria, Bley, and Loss obtained an alternative to (\ref{eq:LO}), which has a lower constant (close to $1.45$) to the expense of adding a gradient term (see Theorem 1.1 in  \cite{BeBlLo11}).

After the work of Lieb and Oxford \cite{LiOx81} many people have considered bounds on the indirect Coulomb energy in lower dimensions (in particular see, e.g., \cite{HaSe01} for the one dimensional case,  \cite{LiSoYn95}, \cite{NaPoSo11}, \cite{RaPiCaPr09} and \cite{RaSeGo11} for the two dimensional case, which is important for the study of quantum dots). In this manuscript we give an alternative to the Lieb--Solovej--Yngvason bound \cite{LiSoYn95}, with a constant  much closer to the numerical values proposed in \cite{RaSeGo11} (see also the references therein) 
to the expense of adding a gradient term. In some sense, the result proven here is the analog of the three dimensional  result proven in  \cite{BeBlLo11} for two dimensional systems.

\bigskip

Our main result is the following theorem:

\begin{theorem}[Estimate on the indirect Coulomb energy in two dimensions]\label{thm:LO}
Let $\psi\in L^{2}(\R^{2N})$ be normalized to one and symmetric (or antisymmetric) in all its variables. Define
$$\rho_{\psi}(x)=N\int_{\R^{2(N-1)}}|\psi|^{2}(x,x_{2},\ldots,x_{N})~\dd x_{2}\ldots\dd x_{N}.$$
Then, for all $\epsilon>0$, 
\begin{equation}\label{eq:ind_en_est}
 E(\psi)\equiv\langle\psi,\sum_{i<j}^{N}|x_{i}-x_{j}|^{-1}\psi\rangle-D(\rho_{\psi},\rho_{\psi})\geq -(1+\epsilon)\beta\int_{\R^{2}}\rho_{\psi}^{3/2} \, \dd x-\frac{4}{\beta\epsilon}\int_{\R^{2}}|\nabla\rho_{\psi}^{1/4}|^{2} \, \dd x
\end{equation}
with
\begin{equation*}
\beta=\left(\frac{4}{3}\right)^{3/2}\sqrt{5\pi-1}\simeq 5.9045. 
\end{equation*}
\end{theorem}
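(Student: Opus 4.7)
My plan is to adapt the three-dimensional argument of Benguria, Bley, and Loss \cite{BeBlLo11} to the two-dimensional Coulomb kernel, combining a Fefferman--de la Llave type representation of $|x-y|^{-1}$ with a density-dependent exchange-hole cut-off. The starting point is the two-dimensional integral identity
$$\frac{1}{|x-y|}=\frac{3}{8}\int_0^{\infty}\frac{\dd r}{r^4}\int_{\R^2}\chi_{B_r(z)}(x)\chi_{B_r(z)}(y)\,\dd z,$$
whose constant $3/8$ is obtained by evaluating the radial integral of the area of intersection of two disks of radius $r$. Setting $n_r(z):=\sum_i\chi_{B_r(z)}(x_i)$, the Coulomb sum becomes
$$\sum_{i<j}|x_i-x_j|^{-1}=\frac{3}{8}\int_0^\infty\frac{\dd r}{r^4}\int_{\R^2}\binom{n_r(z)}{2}\,\dd z.$$
The elementary lower bound $\binom{n}{2}\ge n\mu-\frac{1}{2}(n+\mu^2)$, which is just $(n-\mu)^2\ge 0$ rearranged, applied with $\mu=\mu(r,z):=\int_{B_r(z)}\rho_\psi$ and after taking the expectation value in $\psi$, reconstructs $D(\rho_\psi,\rho_\psi)$ exactly from the quadratic $\mu^2$-piece (by reversing the same integral identity), leaving a remainder that will control the indirect part $E$.

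Using the above bound naively yields a $\dd r/r^2$ divergence at $r=0$; to cure it I would introduce for each $z$ a cut-off radius $R(z)>0$ and replace the combinatorial lower bound by the trivial $\binom{n_r(z)}{2}\ge 0$ for $r<R(z)$. The loss of $D(\rho_\psi,\rho_\psi)$ on the region $r<R(z)$, combined with the large-$r$ linear remainder $\int_{R(z)}^\infty r^{-4}\mu(r,z)\,\dd r$, gives a local error of order $\int\rho_\psi(z)^2 R(z)\,\dd z$ (using $\mu(r,z)\approx\pi r^2\rho_\psi(z)$ in the relevant regime). Choosing $R(z)\propto\rho_\psi(z)^{-1/2}$, so that the mean occupation of $B_{R(z)}(z)$ is of order one, turns this into a multiple of $\int\rho_\psi^{3/2}\,\dd z$; this explains the appearance of the $3/2$ power in the final bound. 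The sub-leading correction comes from the Taylor remainder $\mu(r,z)-\pi r^2\rho_\psi(z)=\frac{\pi r^4}{8}\Delta\rho_\psi(z)+\cdots$; inserting it into the defect and integrating by parts, together with the identity $|\nabla\rho^{1/4}|^2=\frac{1}{16}|\nabla\rho|^2\rho^{-3/2}$, produces a term controlled by $\int|\nabla\rho_\psi^{1/4}|^2$ along with a mixed term of the form $\int\rho_\psi^{3/4}|\nabla\rho_\psi^{1/4}|\,\dd z$.

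Finally, the mixed term is split by Cauchy--Schwarz with parameter $\epsilon>0$,
$$2\int_{\R^2}\rho_\psi^{3/4}|\nabla\rho_\psi^{1/4}|\,\dd z\le\epsilon\beta\int_{\R^2}\rho_\psi^{3/2}\,\dd z+\frac{1}{\epsilon\beta}\int_{\R^2}|\nabla\rho_\psi^{1/4}|^2\,\dd z,$$
which produces the factor $(1+\epsilon)$ in front of $\int\rho_\psi^{3/2}$ and, after accounting for the direct $\int|\nabla\rho_\psi^{1/4}|^2$ contribution, the coefficient $4/(\beta\epsilon)$ of the gradient term in (\ref{eq:ind_en_est}). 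The main obstacle I expect is making the numerical constants sharp: one must identify the optimal proportionality in $R(z)\propto\rho_\psi(z)^{-1/2}$ and track the coefficient $3/8$ from the Fefferman--de la Llave identity through all the radial integrations in order to recover the precise value $\beta=(4/3)^{3/2}\sqrt{5\pi-1}$ --- in particular, the factor $\sqrt{5\pi-1}$ presumably arises from an explicit radial integral involving $\cos^{-1}$ and $\sqrt{1-u^2}$. The combinatorial manipulation and the self-energy bookkeeping are essentially mechanical, but making the leading constant sharp is the delicate part.
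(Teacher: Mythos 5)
There is a genuine gap: your proposal is a strategy sketch in the Lieb--Oxford/exchange-hole style, and it never actually produces the constants that are the whole content of the theorem. The two-dimensional Fefferman--de la Llave constant $3/8$ you quote is correct, and the combinatorial reduction $\binom{n}{2}\ge n\mu-\tfrac12(n+\mu^2)$ with a density-dependent cut-off $R(z)\propto\rho_\psi(z)^{-1/2}$ does explain why a $\int\rho_\psi^{3/2}$ term should appear; but from that point on every quantitative step is heuristic. The replacement $\mu(r,z)\approx\pi r^2\rho_\psi(z)$ and the Taylor expansion $\mu(r,z)-\pi r^2\rho_\psi(z)=\tfrac{\pi r^4}{8}\Delta\rho_\psi(z)+\cdots$ require two derivatives of $\rho_\psi$, which you do not have (the hypotheses only give $\rho_\psi\in L^{3/2}$, $\nabla\rho_\psi^{1/4}\in L^2$); the cut-off $R(z)$ depends on $z$ through $\rho_\psi$, so your integration by parts generates extra terms involving $\nabla R$, i.e.\ further gradient terms with different weights that you have not controlled; and the final Cauchy--Schwarz split is only asserted to reproduce the coefficients $(1+\epsilon)\beta$ and $4/(\beta\epsilon)$. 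In particular, your guess that $\sqrt{5\pi-1}$ comes from a radial integral with $\cos^{-1}$ and $\sqrt{1-u^2}$ is not what happens, and there is no reason an exchange-hole computation would land on exactly $\beta=(4/3)^{3/2}\sqrt{5\pi-1}$; at best your route would give a bound of the same shape with some other constant, which is not the statement to be proved.

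For comparison, the paper's proof (and also that of the three-dimensional result of Benguria--Bley--Loss that you cite, which does not use Fefferman--de la Llave) goes through a stability theorem for an auxiliary molecular functional: one proves that
$a^2\int(\nabla\rho^{1/4})^2+b^2\int\rho^{3/2}-\int V\rho+D(\rho,\rho)+U\ge 0$ whenever $z\le\tfrac{ab}{2}\sqrt{1-\sigma}$, using a Coulomb uncertainty principle on disks, Voronoi cells, and the two-dimensional Lieb--Yau electrostatic inequality; the number $5\pi-1$ arises there from electrostatic geometry, namely the half-plane integral $\int_{H\setminus D_L}|x-x_0|^{-3}\,\dd x=2(\pi-1)/L$ combined with the disk contribution $\pi$, evaluated at the critical coupling $z=ab_2/2$. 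The indirect-energy bound then follows by the Lieb--Thirring device: set $K=N$, $z=1$, $R_i=x_i$ in the stability inequality, multiply by $|\psi|^2$, integrate over $\R^{2N}$ with $\rho=\rho_\psi$, and optimize over $a,b,\sigma$ subject to $ab\sqrt{1-\sigma}\ge 2$, which is where $(1+\epsilon)\beta$ and $4/(\beta\epsilon)$ come from. If you want to pursue your route, you would need to carry out the cut-off and error analysis rigorously (as in Lieb--Oxford, e.g.\ via maximal functions rather than Taylor expansion) and accept that the resulting constant will in general differ from $\beta$.
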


\begin{remarks}

\bigskip
\noindent
i) Our constant $\beta \simeq 5.9045$ is substantially lower than the  constant $C_{LSY} \simeq 481.27$ found in \cite{LiSoYn95} (see equation (5.24) of lemma 5.3 in  \cite{LiSoYn95}), which is the best bound to date. 

\bigskip
\noindent
ii) The constant $\beta$  is close to the numerical values (i.e., $\simeq 1.95$)  of \cite{RaPiCaPr09} (and references therein), but is not sharp.

\end{remarks} 

Our proof relies on a stability result for an auxiliary molecular quantum system in two dimensions (which is proven in Section 2) and an observation of Lieb and Thirring \cite{LiTh75}. The proof of the main theorem is given in Section 3.

\section{A stability result for an auxiliary molecular system in two dimensions}

A key role in our proof of the Lieb--Oxford type bound in two dimensions will be played by a stability result on an auxiliary molecular system in two dimensions. This molecular system may be viewed as the two dimensional version of the zero mass limit of the relativistic Thomas--Fermi--Weizs\"acker  energy functional studied in \cite{BeLoSi07} (which corresponds to the zero mass limit of the model introduced in \cite{En87,EnDr87,EnDr88}; the stability properties of the corresponding atomic system were studied also in \cite{BePe02}). Thus, let us consider the energy functional
\begin{equation} 
\xi(\rho)= a^2 \int_{\R^2} (\nabla \rho^{1/4})^2 \, \dd x + b^2 \int_{\R^2} \rho^{3/2} \, \dd x - \int_{\R^2} V(x) \rho (x) \, \dd x +D(\rho,\rho) +U,
\label{eq:s1}
\end{equation}
where the potential $V$ is given by
\begin{equation}
V(x) = \sum_{i=1}^K \frac{z}{|x-R_i|},
\label{eq:s2}
\end{equation} 
which may be viewed as a Coulomb--like potential generated by $K$ point particles (nuclei) of (equal)  charge $z>0$, located at $R_i \in \R^2$ (with $i=1,\dots K$). 
Here, the function $\rho(x) \ge 0$ is the electronic density of a system of $N$ electrons, and 
\begin{equation}
D(\rho,\rho) = \frac{1}{2} \int_{\R^2 \times \R^2} \rho(x) \frac{1}{|x-y|} \rho(y) \, \dd x \, \dd y
\label{eq:s3}
\end{equation}
is the electronic repulsion energy.  Finally, 
\begin{equation}
U = \sum_{1 \le i < j \le K} \frac{z^2}{|R_i-R_j|}.
\label{eq:s4}
\end{equation}
The powers of the first two terms in (\ref{eq:s1}), i.e., 
\begin{equation}
{T}(\rho) =  a^2 \int_{\R^2} (\nabla \rho^{1/4})^2 \, \dd x + b^2 \int_{\R^2} \rho^{3/2} \, \dd x, 
\label{eq:s5}
\end{equation}
are such that $T(\rho_{\alpha}) = \alpha T(\rho)$, where $\rho_{\alpha}(x) = \alpha^2 \rho(\alpha x)$  (with $\alpha>0$) is such that $\int_{\R^2} \rho_{\alpha} \, \dd x = 
\int_{\R^2} \rho(x) \, \dd x$. In other words, the kinetic energy of the electrons scales like one over a length, i.e., in the same way as the potential energy.  Then, as
usual in this situation, the values of the coupling constant (i.e., the values of the nuclear charge) will be crucial to ensure stability of the system. 
Our main result in this section is the following stability theorem, which is the two dimensional  analog of Theorem 1.2 in \cite{BeLoSi07}.

\begin{theorem}\label{theo:main}
 For any  $a, b>0$, and $R_i \in \R^2$, $i=1, \dots, K$, and for all $\rho \ge 0$ (with $\rho \in L^{3/2} (\R^2)$ and $\nabla \rho^{1/4} \in L^{2} (\R^2)$), we have that 
\begin{equation} 
\xi(\rho) \equiv a^2 \int_{\R^2} (\nabla \rho^{1/4})^2 \, \dd x + b^2 \int_{\R^2} \rho^{3/2} \, \dd x - \int_{\R^2} V(x) \rho (x) \, \dd x +D(\rho,\rho) +U \ge 0, 
\label{eq:s6}
\end{equation}
 where $V$, $D$, and $U$ are defined  by (\ref{eq:s2}),  (\ref{eq:s3}), and  (\ref{eq:s4}), respectively, provided, 
 \begin{equation}
 0 \le z \le z_c(a,b) \equiv \frac{a \, b}{2} \sqrt{1-\sigma}. 
 \label{eq:s7}
 \end{equation}
 Here $0 < \sigma <1$ is the only positive root of the quartic equation 
 \begin{equation}
 \frac{\sigma^{2}}{\sqrt{1-\sigma}}=\frac{32(5\pi-1)}{27}\frac{a}{b^{3}}
 \label{eq:s8}
 \end{equation}
 on the interval $(0,1)$.
 \end{theorem}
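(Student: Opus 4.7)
The plan is to prove Theorem~\ref{theo:main} by combining (i) an atomic Sobolev-type inequality that controls the Coulomb singularity of a single nucleus by the two kinetic terms, with (ii) a Baxter--Lieb--Yau-type electrostatic inequality that reduces the $K$-nucleus problem to a sum of such single-nucleus problems on Voronoi cells, and (iii) an optimization of the split of the kinetic functional between these two uses, producing the parameter $\sigma$ and the explicit form of $z_{c}(a,b)$.

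For the atomic step (one nucleus at the origin), the goal is
\begin{equation*}
a^{2}\int_{\R^{2}}|\nabla\rho^{1/4}|^{2}\,\dd x+b^{2}\int_{\R^{2}}\rho^{3/2}\,\dd x\geq\frac{ab}{2}\int_{\R^{2}}\frac{\rho(x)}{|x|}\,\dd x.
\end{equation*}
By Riesz symmetric decreasing rearrangement (the weight $|x|^{-1}$ is itself symmetric decreasing, so the right-hand side does not decrease while the kinetic terms do not increase when $\rho$ is replaced by $\rho^{*}$), it suffices to treat radial $\rho$. Writing $u=\rho^{1/4}$ and passing to polar coordinates, the identity $\int_{0}^{\infty}u^{4}\,\dd r=-4\int_{0}^{\infty}r\,u^{3}u'\,\dd r$ (integration by parts, boundary terms vanishing on $C_{c}^{\infty}$ and then by density) together with Cauchy--Schwarz gives $\int_{0}^{\infty}u^{4}\,\dd r\leq 4\bigl(\int_{0}^{\infty}ru'^{2}\,\dd r\bigr)^{1/2}\bigl(\int_{0}^{\infty}ru^{6}\,\dd r\bigr)^{1/2}$, and AM--GM on $a^{2}(\cdot)+b^{2}(\cdot)$ closes the atomic bound with the sharp constant $ab/2$ (saturated by $u(r)=(2(c_{0}+cr))^{-1/2}$).

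For the molecular step I introduce the Voronoi tessellation $\R^{2}=\bigsqcup_{j}\Gamma_{j}$ with $\Gamma_{j}=\{x:|x-R_{j}|\leq|x-R_{k}|\text{ for all }k\}$, and seek an electrostatic inequality
\begin{equation*}
-\int V\rho\,\dd x+D(\rho,\rho)+U\geq-z\sum_{j=1}^{K}\int_{\Gamma_{j}}\frac{\rho(x)}{|x-R_{j}|}\,\dd x-\mathcal{E}_{\mathrm{err}}(\rho),
\end{equation*}
so that, since $\int_{\R^{2}}|\nabla\rho^{1/4}|^{2}=\sum_{j}\int_{\Gamma_{j}}|\nabla\rho^{1/4}|^{2}$ (the cells being disjoint), applying the atomic bound in each cell reduces the problem to absorbing $\mathcal{E}_{\mathrm{err}}$. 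The classical 3D Baxter inequality relies on $-\Delta(1/|x|)=4\pi\delta_{0}$, which has no 2D analog for the $1/|x|$ kernel; I therefore plan to lift the problem to $\R^{3}$ by placing the nuclei and $\rho$ on the plane $\{x_{3}=0\}\subset\R^{3}$, so that $|x-y|^{-1}$ in 2D coincides with the 3D Coulomb kernel restricted to this plane and the standard 3D potential-theoretic machinery becomes available. A parameter $\sigma\in(0,1)$ is then introduced to split the kinetic functional: the fraction $1-\sigma$ is fed into the atomic bound (producing the factor $\sqrt{1-\sigma}$ in $z_{c}$), while the remaining $\sigma$ absorbs $\mathcal{E}_{\mathrm{err}}$, with the quartic equation $\sigma^{2}/\sqrt{1-\sigma}=\frac{32(5\pi-1)}{27}\,a/b^{3}$ expressing the minimum $\sigma$ for which this absorption succeeds.

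The main technical obstacle is the 2D electrostatic inequality in the molecular step: while the $\R^{3}$ lift makes standard Baxter-type techniques available, propagating through the reduction the \emph{sharp} constant responsible for $5\pi-1$ (rather than a weaker, still finite, constant sufficient only for qualitative stability) is delicate, and is what ultimately pins down the exact quartic equation satisfied by $\sigma$.
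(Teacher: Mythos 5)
Your overall architecture---a single-nucleus Coulomb uncertainty principle, a two-dimensional Lieb--Yau electrostatic inequality obtained by viewing $\rho\,\dd x$ as a measure supported on a plane in $\R^{3}$, and an optimization of a split of the kinetic functional producing $\sigma$---is indeed the skeleton of the paper's argument, and your global atomic inequality $a^{2}\int|\nabla\rho^{1/4}|^{2}+b^{2}\int\rho^{3/2}\geq\frac{ab}{2}\int\rho(x)|x|^{-1}\dd x$ is correct (it is essentially the $R\to\infty$ case of the paper's local inequality, proved there by a vector-field/Cauchy--Schwarz argument with no rearrangement needed). But two essential steps are missing, and the second is precisely where the stated constants come from. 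First, you cannot simply apply your atomic bound ``in each cell'': on a bounded region the integration by parts behind it produces a boundary term $\int_{\partial\Gamma_{j}}\bigl((x-R_{j})\cdot n\bigr)|x-R_{j}|^{-1}f^{4}\,\dd S$, which is nonnegative (each Voronoi cell is convex and contains $R_{j}$) and adds to the very quantity you are trying to bound, so it cannot be dropped. The paper avoids this by choosing $u(r)=r^{-1}-R^{-1}$, i.e.\ proving the local bound $a^{2}\int_{D_{R}}|\nabla f|^{2}+b^{2}\int_{D_{R}}f^{6}\geq ab\int_{D_{R}}\bigl(\tfrac{1}{2|x|}-\tfrac{1}{R}\bigr)f^{4}$, and applying it only on the inscribed disk $B_{j}$ of radius $D_{j}=\mathrm{dist}(R_{j},\partial\Gamma_{j})$, discarding the kinetic energy in $\Gamma_{j}\setminus B_{j}$.

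Second, your $\mathcal{E}_{\mathrm{err}}$ and its absorption are exactly the content you have not supplied, and in the paper it is not absorbed by a reserved fraction of the gradient term. One splits only $b^{2}=b_{1}^{2}+b_{2}^{2}$, uses the local bound with constants $(a,b_{2})$ (requiring $z\leq ab_{2}/2$, which cancels the nearest-nucleus singularity inside $B_{j}$), and collects the leftovers into the piecewise potential $W-\Phi$, equal to $z/|x-R_{j}|$ on $\Gamma_{j}\setminus B_{j}$ and to $ab_{2}/D_{j}$ on $B_{j}$. The term $b_{1}^{2}\int\rho^{3/2}-\int(W-\Phi)\rho$ is then bounded below by pointwise minimization over $\rho$, giving $-\frac{4}{27b_{1}^{4}}\int(W-\Phi)_{+}^{3}$, which is evaluated using the geometric lemma $\int_{H\setminus D_{L}(x_{0})}|x-x_{0}|^{-3}\dd x=2(\pi-1)/L$ (each cell lies in a half-plane): per cell this costs $\frac{4}{27b_{1}^{4}}\bigl(2z^{3}(\pi-1)+\pi a^{3}b_{2}^{3}\bigr)D_{j}^{-1}$, to be absorbed by the Lieb--Yau gain $\frac{z^{2}}{8}D_{j}^{-1}$. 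With $z=ab_{2}/2$ one has $2(\pi-1)/8+\pi=(5\pi-1)/4$, which is the sole source of the factor $5\pi-1$, and optimizing the split via $b_{2}=p\,b\sqrt{1-\sigma}$ yields exactly the quartic (\ref{eq:s8}). Since the theorem asserts these specific values of $z_{c}$ and $\sigma$, labelling this step ``delicate'' leaves the proof genuinely incomplete: without the $L^{3/2}$-sacrifice/pointwise-minimization mechanism (or an equivalent quantitative device) your scheme proves at best a stability statement with an unspecified constant, not the theorem as stated.
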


In the rest of this section we will give the proof of this theorem, which is similar to the proof of Theorem 1.2 in \cite{BeLoSi07}. Notice that the upper limit $z_c(a,b)$ on $z$ to insure stability is not sharp; in other words, there could still be values of $z$ above our $z_c$ for which $\xi(\rho) \ge 0$. We start with an appropriate {\it Coulomb uncertainty principle}.

\begin{theorem}\label{theo:uncer_princ}
For any smooth function $f$ on the closed disk $D_{R}$, of radius $R$,  and for all  $a,\, b\in\R$, we have
$$a^{2}\int_{D_{R}}|\nabla f(x)|^{2}\, \dd x+b^{2}\int_{D_{R}}f(x)^{6}\, \dd x\geq ab\int_{D_{R}}\left(\frac{1}{2|x|}-\frac{1}{R}\right)f(x)^{4}\, \dd x.$$
\end{theorem}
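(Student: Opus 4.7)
The plan is to produce the stated Coulomb uncertainty principle as a consequence of a single integration-by-parts identity followed by Cauchy--Schwarz and an AM--GM step; no variational calculus is needed.

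First I would rewrite the weight $\frac{1}{2|x|}-\frac{1}{R}$ on $D_R$ as the divergence of a vector field vanishing on $\partial D_R$. Concretely, in two dimensions one has $\nabla\cdot\hat r=1/r$ and $\nabla\cdot x=2$, so the field
$$\vec V(x)=\tfrac{1}{2}\bigl(\hat r-\tfrac{x}{R}\bigr)f(x)^{4}$$
satisfies $\vec V\cdot\hat n=0$ on $\partial D_R$ (since $\hat r=\hat n$ and $|x|=R$ there), and a direct computation gives
$$\nabla\cdot\vec V=\Bigl(\frac{1}{2|x|}-\frac{1}{R}\Bigr)f^{4}+2\Bigl(1-\frac{|x|}{R}\Bigr)f^{3}\,\partial_r f.$$
The divergence theorem (applied on $D_R\setminus B_\varepsilon(0)$ and letting $\varepsilon\to 0$, which is harmless because $f$ is smooth and $\vec V$ is bounded) yields the key identity
$$\int_{D_R}\Bigl(\frac{1}{2|x|}-\frac{1}{R}\Bigr)f^{4}\,\dd x = -\,2\int_{D_R}\Bigl(1-\frac{|x|}{R}\Bigr)f^{3}\,\partial_r f\,\dd x.$$

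Next I would apply Cauchy--Schwarz to the right-hand side, factoring it as $\int \bigl[(1-r/R)f^{3}\bigr]\cdot\bigl[\partial_r f\bigr]\,\dd x$, and then bound $(1-r/R)^{2}\le 1$ on $D_R$ and $(\partial_r f)^{2}\le|\nabla f|^{2}$ to obtain
$$\Bigl|\int_{D_R}\Bigl(\frac{1}{2|x|}-\frac{1}{R}\Bigr)f^{4}\,\dd x\Bigr|\le 2\Bigl(\int_{D_R}|\nabla f|^{2}\,\dd x\Bigr)^{1/2}\Bigl(\int_{D_R}f^{6}\,\dd x\Bigr)^{1/2}.$$
Finally, the elementary inequality $2|ab|\sqrt{XY}\le a^{2}X+b^{2}Y$ (a rewriting of $(a\sqrt{X}-b\sqrt{Y})^{2}\ge 0$) applied with $X=\int|\nabla f|^{2}\,\dd x$ and $Y=\int f^{6}\,\dd x$ closes the argument and covers $a,b$ of arbitrary sign (the case $ab\le 0$ is trivial since the left-hand side is nonnegative).

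The only real obstacle is the choice of the auxiliary vector field: one must spot that combining $\hat r$ (which is responsible for producing $1/r$) with $x/R$ (which produces the constant $2/R$) in the ratio $1:(-1)$ simultaneously generates the weight $\frac{1}{2r}-\frac{1}{R}$ and kills the boundary contribution. Once this is observed, the rest is routine Cauchy--Schwarz/Young book-keeping, and the factor $1-r/R$ being bounded by $1$ is exactly what allows the inequality to be stated without any geometric constant depending on $R$.
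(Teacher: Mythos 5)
Your proof is correct and is essentially the paper's own argument: the paper proves a lemma for a general radial weight $u$ with $u(R)=0$ by setting $g_{j}(x)=u(|x|)x_{j}$, integrating by parts, and applying Cauchy--Schwarz, then specializes to $u(r)=r^{-1}-R^{-1}$; your vector field $\vec V=\tfrac12\bigl(\hat r-\tfrac{x}{R}\bigr)f^{4}=\tfrac12 u(|x|)\,x\,f^{4}$ is exactly that choice, so your divergence identity, the bound $(1-r/R)^{2}\le 1$, and the final $2|ab|\sqrt{XY}\le a^{2}X+b^{2}Y$ step reproduce the same computation. One cosmetic remark: the parenthetical claim that the case $ab\le 0$ is trivial ``since the left-hand side is nonnegative'' is not quite right (the weight $\tfrac{1}{2|x|}-\tfrac{1}{R}$ changes sign, so the right-hand side need not be nonpositive), but it is also unnecessary, since your chain through $|ab|$ times the absolute value of the integral already covers all real $a,b$.
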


\noindent
To prove the theorem one only imitates the proof of Theorem 2.1 in \cite{BeLoSi07} that deals with the three-dimensional case. We start with the following preliminary result which may be of  independent interest.

\begin{lemma}
Let $u=u(|x|)$ be a sufficiently smooth real function (so that all the terms in (\ref{eq:uncer_princ_lem}) are finite) defined on the interval $[0,R]$, such that $u(R)=0$. Then the following uncertainty principle holds
\begin{equation}\label{eq:uncer_princ_lem}
\left|\int_{D_{R}}[2u(|x|)+|x|u'(|x|)]f(x)^{4}\, \dd x\right|\leq 4\left(\int_{D_{R}}|\nabla f(x)|^{2}\, \dd x\right)^{1/2}\left(\int_{D_{R}}u(|x|)^{2}|x|^{2}f(x)^{6}\, \dd x\right)^{1/2}.
\end{equation}
In (\ref{eq:uncer_princ_lem}) there is equality if and only if
\begin{equation}
f(x)^{2}=\frac{1}{\lambda\int_{0}^{|x|}su(s)\dd s+C},
\label{eq:function}
\end{equation}
for some constants $C$ and $\lambda$.
\end{lemma}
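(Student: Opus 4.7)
The plan is to recognize that the expression $2u(|x|) + |x|u'(|x|)$ is precisely a two-dimensional divergence, namely $\nabla \cdot (x\, u(|x|))$. Indeed, for any $i=1,2$, $\partial_{i}(x_{i} u(|x|)) = u(|x|) + x_{i}^{2}\, u'(|x|)/|x|$, and summing over $i$ yields $2u(|x|) + |x|u'(|x|)$. So the left-hand side of (\ref{eq:uncer_princ_lem}) can be rewritten as $\int_{D_{R}} \nabla\cdot(x\, u(|x|))\, f(x)^{4}\, \dd x$.

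Next I would integrate by parts. The boundary contribution on $\partial D_{R}$ involves the normal trace $x\cdot \hat n\, u(|x|)\, f^{4}$, which vanishes because $u(R)=0$. This leaves
\begin{equation*}
\int_{D_{R}}[2u(|x|)+|x|u'(|x|)]f(x)^{4}\,\dd x \;=\; -\int_{D_{R}} x\, u(|x|)\cdot \nabla(f^{4})\,\dd x \;=\; -4\int_{D_{R}} f(x)^{3}\, u(|x|)\, x\cdot\nabla f(x)\,\dd x.
\end{equation*}
Applying the Cauchy--Schwarz inequality to the vector-valued inner product $\langle \nabla f,\; x\, u(|x|)\, f^{3}\rangle_{L^{2}(D_{R})}$ then gives exactly the bound (\ref{eq:uncer_princ_lem}):
\begin{equation*}
\left|4\int_{D_{R}} f^{3}\, u(|x|)\, x\cdot\nabla f\,\dd x\right| \le 4\left(\int_{D_{R}}|\nabla f|^{2}\,\dd x\right)^{1/2}\left(\int_{D_{R}} u(|x|)^{2}|x|^{2} f^{6}\,\dd x\right)^{1/2}.
\end{equation*}

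For the equality case, Cauchy--Schwarz forces the two vector fields to be proportional, i.e.\ $\nabla f(x) = -\tfrac{\lambda}{2}\, x\, u(|x|)\, f(x)^{3}$ for some constant $\lambda$. Rewriting this as $\nabla(f^{-2}) = -2 f^{-3}\nabla f = \lambda\, x\, u(|x|)$, one sees that $f^{-2}$ is a radial function whose gradient equals $\lambda\, x\, u(|x|)$; integrating radially from $0$ to $|x|$ yields $f(x)^{-2} = \lambda\int_{0}^{|x|} s\, u(s)\,\dd s + C$, which is the claimed form (\ref{eq:function}).

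There is no serious obstacle here: the whole argument hinges on spotting the divergence identity $2u+|x|u' = \nabla\cdot(x\, u(|x|))$ that is specific to two dimensions (in $\R^{n}$ one would get $n\, u + |x|u'$ instead), and on the fact that the hypothesis $u(R)=0$ is precisely what is needed to discard the boundary term in the integration by parts. Everything else is Cauchy--Schwarz together with a one-line ODE to identify the extremizer.
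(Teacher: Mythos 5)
Your proof is correct and follows essentially the same route as the paper: the divergence identity $2u+|x|u'=\nabla\cdot(x\,u(|x|))$ (the paper writes it via $g_j(x)=u(|x|)x_j$), integration by parts using $u(R)=0$ to kill the boundary term, Cauchy--Schwarz, and the proportionality condition $\nabla f=-\tfrac{\lambda}{2}x\,u(|x|)f^{3}$ integrated radially for the equality case. No discrepancies to report.
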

\begin{proof}
Set $g_{j}(x)=u(|x|)x_{j}$. Then we have,
\begin{equation*}
 \begin{split}
  &\int_{D_{R}}[2u(|x|)+|x|u'(|x|)]f(x)^{4}\, \dd x=\sum_{j=1}^{2}\int_{D_{R}}[\partial_{j}g_{j}(x)]f(x)^{4}\, \dd x= \\
  &=\sum_{j}\int_{D_{R}}f(x)\partial_{j}[g_{j}(x)f(x)^{3}]\, \dd x-3\sum_{j}\int_{D_{R}}f(x)^{3}g_{j}(x)\partial_{j}f(x)\, \dd x=\\
  &=-4\int_{D_{R}}\langle \nabla f(x),\, x\rangle u(|x|)f(x)^{3}\, \dd x.
 \end{split}
\end{equation*}
In the last equality we integrated by parts and made use of the fact that $u$ vanishes on the boundary $\partial D_{R}$. Next, the Schwarz inequality implies 
$$\left|\int_{D_{R}}[2u(|x|)+|x|u'(|x|)]f(x)^{4}\, \dd x\right|\leq 4\left(\int_{D_{R}}|\nabla f(x)|^{2}\, \dd x\right)^{1/2}\left(\int_{D_{R}}u(|x|)^{2}|x|^{2}f(x)^{6}\, \dd x\right)^{1/2}.$$
In the last expression, equality is obtained  if and only if
$$\partial_{j}f(x)=-\frac{\lambda}{2}x_{j}u(|x|)f(x)^{3},$$
which after an integration yields the function given by (\ref{eq:function}) above.
\end{proof}

\textit{Proof of Theorem \ref{theo:uncer_princ}.} Choosing $u(r)=r^{-1}-R^{-1}$ in  (\ref{eq:uncer_princ_lem}), we conclude that
\begin{equation*}
\begin{split}
&\frac{ab}{2}\left|\int_{D_{R}}\left(\frac{1}{|x|}-\frac{2}{R}\right)f(x)^{4}\dd x\right|\leq 2ab \left(\int_{D_{R}}|\nabla f(x)|^{2}\dd x\right)^{1/2}\left(\int_{D_{R}}f(x)^{6}\dd x\right)^{1/2}\leq\\
&\leq a^{2}\int_{D_{R}}|\nabla f(x)|^{2}\dd x+b^{2}\int_{D_{R}}f(x)^{6}\dd x.
\end{split}
\end{equation*}  
\hfill$\square$

To prove our main result of this section, i.e., Theorem \ref{theo:main}, we will also need the following auxiliary lemma.
\begin{lemma}\label{lem:halfplane_int}
Let $D_{L}(x_{0})=\{ x\in\R^{2} \bigm| \ |x-x_{0}|<L\}$ and $H$ be a half plane such that $\mathrm{dist}(x_{0},\partial H)=L$ and $x_{0}\in H$. Then 
$$\int\limits_{H\setminus D_{L}(x_{0})}\frac{1}{|x-x_{0}|^{3}} \, \dd x=\frac{2(\pi-1)}{L}.$$ 
\end{lemma}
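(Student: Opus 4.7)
The plan is to set up coordinates, reduce to a one-dimensional integral in polar form, and evaluate with a substitution and one integration by parts.

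First, by translation invariance I may place $x_0$ at the origin, and by rotation invariance I may assume the half plane is $H=\{x=(x_1,x_2)\in\R^2 : x_2>-L\}$, so that $\partial H=\{x_2=-L\}$ and $\mathrm{dist}(0,\partial H)=L$. Then $D_L(x_0)=D_L(0)$ is the open disk of radius $L$ centered at the origin, so $H\setminus D_L(0)$ consists of points $(r,\theta)$ in polar coordinates with $r\geq L$ and $r\sin\theta>-L$, i.e.\ $\sin\theta>-L/r$.

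Next, for each fixed $r\geq L$, I determine the angular measure of the admissible set. Setting $\beta(r)=\arcsin(L/r)\in(0,\pi/2]$, the inequality $\sin\theta>-\sin\beta(r)$ fails exactly on an arc of length $\pi-2\beta(r)$, so the admissible arc has total length $\pi+2\arcsin(L/r)$. Using the area element $r\,\dd r\,\dd\theta$ and the factor $|x|^{-3}=r^{-3}$, I obtain
\begin{equation*}
\int_{H\setminus D_L(0)}\frac{\dd x}{|x|^{3}}=\int_{L}^{\infty}r^{-2}\bigl(\pi+2\arcsin(L/r)\bigr)\,\dd r.
\end{equation*}

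Now I carry out the substitution $u=L/r$, which gives $r^{-2}\,\dd r=-L^{-1}\dd u$ and maps $r\in[L,\infty)$ to $u\in(0,1]$, yielding
\begin{equation*}
\int_{H\setminus D_L(0)}\frac{\dd x}{|x|^{3}}=\frac{1}{L}\int_{0}^{1}\bigl(\pi+2\arcsin u\bigr)\,\dd u.
\end{equation*}
One integration by parts gives $\int_{0}^{1}\arcsin u\,\dd u=\bigl[u\arcsin u+\sqrt{1-u^{2}}\bigr]_{0}^{1}=\pi/2-1$, whence the right-hand side equals $L^{-1}\bigl(\pi+\pi-2\bigr)=2(\pi-1)/L$, which is the claimed identity.

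There is no real obstacle; the only points to be careful about are the angular measure computation (making sure to count the full arc, including the regions where $\sin\theta\geq 0$) and the correct sign in the substitution. Both are routine once the geometry is drawn.
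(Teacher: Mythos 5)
Your proof is correct and follows essentially the same route as the paper: pass to polar coordinates centered at $x_0$ and evaluate the integral over $H\setminus D_L$ directly. The only (inessential) difference is the order of integration — you fix $r$ and measure the admissible arc, which costs you the $\arcsin$ substitution and an integration by parts, whereas the paper fixes the angle and integrates radially first, with angle-dependent radial limits, so only elementary integrals appear.
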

\begin{proof}
Let us shift the origin of the coordinates to $x_{0}$ and choose the $x$ Cartesian axes parallel to $\partial H$. Then in the respective polar coordinates $(\varrho,\varphi)$,
$$\int\limits_{H\setminus D_{L}(x_{0})}\frac{1}{|x-x_{0}|^{3}}\, \dd x=2 \int_{-\pi/2}^{0}\int_{L}^{\frac{L}{\cos{\varphi}}}\frac{1}{\varrho^{2}}\, \dd\varrho \, \dd\varphi+
2 \int_{0}^{\pi/2}\int_{L}^{\infty}\frac{1}{\varrho^{2}}\, \dd\varrho \, \dd\varphi$$
from which the assertion of the lemma follows by a straightforward integration. 
\end{proof}

In the sequel we need some notation. We introduce the nearest neighbor, or Voronoi,
cells \cite{Vo07} (see also the review \cite{LiSe09}), $\{\Gamma_j\}_{j=1}^K$, defined by
\begin{equation}
\Gamma_j=\{x \bigm| |x-R_j| \le |x-R_k|\}.
\end{equation}
The boundary of $\Gamma_j$, $\partial \Gamma_j$, consists of a finite number of lines. We also define the distance\begin{equation}
D_j={\rm dist}(R_j,\partial \Gamma_j) = \frac{1}{2} \min \{|R_k-R_j| \bigm| k \neq j \}.
\end{equation}
Finally, we denote by $B_j$ the disk of radius $D_j$ centered at $R_j$, $j=1,\dots,K$.

One of the key ingredients we need in the sequel is the two dimensional version, (see e.g., \cite{LiSoYn94}), of an electrostatic inequality of
Lieb and Yau \cite{LiYa88a,LiYa88b}.  
Define the piecewise function $\Phi(x)$ on $\R^2$ with the aid of the Voronoi cells mentioned above. 
In the cell $\Gamma_j$,  $\Phi(x)$  equals the electrostatic potential generated by
all the nuclei except for the nucleus situated in $\Gamma_j$ itself, i.e., for $x \in \Gamma_j$,
\begin{equation}
\Phi(x) = \sum_{\substack{i=1\\ i \neq j}}^K \frac{z}{|x-R_i|}.
\end{equation}
Then, one has (see, e.g., \cite{LiSoYn94}), 
\begin{equation}
D(\rho,\rho) - \int_{\R^2} \Phi(x) \rho(x) \, \dd x + U \ge \frac{z^2}{8} \sum_{j=1}^K \frac{1}{D_j}.
\label{eq:BEI}
\end{equation}
This follows at once from the standard (three dimensional)  Lieb--Yau electrostatic inequality \cite{LiYa88a,LiYa88b} by taking a Borel measure supported on a two dimensional 
plane, with density $\rho(x)$. 

With the help of the Coulomb uncertainty principle and the two dimensional electrostatic inequality (\ref{eq:BEI}), we are ready to prove the following estimate. 

\begin{lemma}\label{lem:main_est}
 For any  $\rho \in L^{3/2} (\R^2)$ such that $\nabla \rho^{1/4} \in L^{2} (\R^2)$; for all $b_1 > 0$, and $b_2>0$ such that $b_1^2+b_2^2 = b^2$, and $z\le a \, b_2/2$, we have
 \begin{equation}\label{eq:main_est}
 \xi(\rho)\geq \sum_{j=1}^{K}\frac{1}{D_{j}}\left[\frac{z^{2}}{8}-\frac{4}{27b_{1}^{4}}\left(2z^{3}(\pi-1)+\pi a^{3}b_{2}^{3}\right)\right].
\end{equation}
\end{lemma}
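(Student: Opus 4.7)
The plan is to combine the three tools that have just been assembled: the two-dimensional Lieb--Yau electrostatic inequality (\ref{eq:BEI}), the Coulomb uncertainty principle (Theorem \ref{theo:uncer_princ}), and the half-plane estimate (Lemma \ref{lem:halfplane_int}). First, on each Voronoi cell $\Gamma_j$ split the external potential as $V(x) = \Phi(x) + z/|x-R_j|$, where $\Phi$ collects the contributions of the other nuclei. The inequality (\ref{eq:BEI}) eliminates $D(\rho,\rho) - \int\Phi\rho\,\dd x + U$ at the cost of $\sum_{j} z^2/(8 D_j)$, so it suffices to prove
\[
T(\rho) - z\sum_{j=1}^{K}\int_{\Gamma_j}\frac{\rho(x)}{|x-R_j|}\,\dd x \;\geq\; -\sum_{j=1}^{K}\frac{4}{27\, b_1^4\, D_j}\bigl(2z^3(\pi-1) + \pi a^3 b_2^3\bigr).
\]

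Set $f = \rho^{1/4}$, so $\rho = f^4$ and $\rho^{3/2} = f^6$; split $b^2 = b_1^2 + b_2^2$, and partition each $\Gamma_j$ into the inner disk $B_j$ and the exterior $\Gamma_j\setminus B_j$. Inside $B_j$, apply Theorem \ref{theo:uncer_princ} with coefficients $a, b_2$ and radius $R = D_j$: the assumption $z \leq ab_2/2$ makes the Coulomb singularity absorbable by $a^2\int_{B_j}|\nabla f|^2 + b_2^2\int_{B_j}f^6$, leaving only the residual $-(ab_2/D_j)\int_{B_j}f^4\,\dd x$. Control this via H\"older's inequality, $\int_{B_j}f^4 \leq (\pi D_j^2)^{1/3}\bigl(\int_{B_j}f^6\bigr)^{2/3}$, and minimize the expression $b_1^2 A - (ab_2/D_j)(\pi D_j^2)^{1/3} A^{2/3}$ in $A = \int_{B_j}f^6$, drawing on the available $b_1^2\int_{B_j}f^6$; the minimum value is exactly $-4\pi a^3 b_2^3/(27 b_1^4 D_j)$.

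Outside $B_j$, apply H\"older's inequality with exponents $3/2$ and $3$ to obtain
\[
\int_{\Gamma_j\setminus B_j}\frac{f^4}{|x-R_j|}\,\dd x \;\leq\; \Bigl(\int_{\Gamma_j\setminus B_j}f^6\,\dd x\Bigr)^{2/3}\Bigl(\int_{\Gamma_j\setminus B_j}\frac{\dd x}{|x-R_j|^3}\Bigr)^{1/3}.
\]
The Voronoi cell $\Gamma_j$ is contained in a half-plane $H_j$ with $\mathrm{dist}(R_j,\partial H_j) = D_j$ (a standard property of Voronoi cells), so Lemma \ref{lem:halfplane_int} bounds the second factor by $(2(\pi-1)/D_j)^{1/3}$. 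Minimizing against the remaining $b_1^2\int_{\Gamma_j\setminus B_j}f^6$, just as in the previous step, yields $-8z^3(\pi-1)/(27 b_1^4 D_j)$. The leftover $a^2\int_{\Gamma_j\setminus B_j}|\nabla f|^2 + b_2^2\int_{\Gamma_j\setminus B_j}f^6 \geq 0$ is discarded. Summing the three contributions $z^2/(8D_j)$, $-4\pi a^3 b_2^3/(27 b_1^4 D_j)$, and $-8z^3(\pi-1)/(27 b_1^4 D_j)$ over $j$ reproduces (\ref{eq:main_est}).

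The main obstacle is the careful bookkeeping of how the three pieces of kinetic energy---$a^2\int|\nabla f|^2$, $b_2^2\int f^6$, and $b_1^2\int f^6$---are distributed across $B_j$ and $\Gamma_j\setminus B_j$: the first two combine in Theorem \ref{theo:uncer_princ} on $B_j$ to tame the singular $z/|x-R_j|$ term (with $z\le ab_2/2$ tuned exactly for this), while the third, used on each piece separately, absorbs both the $\int_{B_j}f^4$ residual on the inside and the $(\int_{\Gamma_j\setminus B_j}f^6)^{2/3}$ tail on the outside. The fact that the H\"older exponents $3/2, 3$ match the $\rho^{3/2}$ scaling and the $|x-R_j|^{-3}$ structure integrated by Lemma \ref{lem:halfplane_int} is what makes the decomposition close cleanly.
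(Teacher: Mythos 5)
Your proposal is correct and follows essentially the same route as the paper: Voronoi decomposition, the uncertainty principle on each disk $B_j$ with the split $b^2=b_1^2+b_2^2$ and the condition $z\le ab_2/2$, the electrostatic inequality (\ref{eq:BEI}), and Lemma \ref{lem:halfplane_int} for the half-plane integral. The only (equivalent) variation is that you absorb the residual linear-in-$\rho$ terms into $b_1^2\int\rho^{3/2}$ by H\"older plus a scalar minimization on each region, whereas the paper minimizes the functional $\xi_1$ pointwise via the substitution $\hat\rho=4(W-\Phi)_+^2/(9b_1^4)$; both yield the same constants.
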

\begin{proof}
Setting $f(x)^4=\rho(x)$, splitting $\R^2$ as the disjoint union of the Voronoi cells $\Gamma_j$, using
Theorem \ref{theo:uncer_princ} in each disk $B_j$, and discarding the kinetic energy terms (which are positive) in the complements $\Gamma_j \setminus B_j$ 
we conclude that
\begin{equation}\label{eq:initial_est}
 \xi(\rho)\geq b_{1}^{2}\int_{\R^{2}}\rho^{3/2}\dd x-\int_{\R^{2}}V\rho~\dd x+ab_{2}\sum_{j=1}^{K}\int_{B_{j}}\left(\frac{1}{2|x-R_{j}|}-\frac{1}{D_{j}}\right)\rho(x)\dd x+
 D(\rho,\rho)+U.
\end{equation}
It is convenient to define the piecewise function $W(x)$ as
\begin{equation}
 W(x)=\begin{cases}
        \Phi(x)+\dfrac{z}{|x-R_{j}|}=V(x)&\text{if }x\in\Gamma_{j}\setminus B_{j} \\
	\Phi(x)+\dfrac{ab_{2}}{D_{j}}&\text{if }x\in B_{j},
       \end{cases}
\label{eq:defW}
\end{equation}
Provided $z \le a \, b_2/2$ (which we assume from here on), we can estimate from below the sum of the second and third integrals in (\ref{eq:initial_est}) in terms of $W(x)$ as follows, 
\begin{equation*}
 \begin{split}
 & ab_{2}\sum_{j=1}^{K}\int_{B_{j}}\left(\frac{1}{2|x-R_{j}|}-\frac{1}{D_{j}}\right)\rho(x) \, \dd x-\int_{\R^{2}}V\rho \, \dd x\\
 &=ab_{2}\sum_{j=1}^{K}\int_{B_{j}}\left(\frac{1}{2|x-R_{j}|}-\frac{1}{D_{j}}\right)\rho(x)\, \dd x-z\sum_{i,j=1}^{K}\int_{\Gamma_{j}\setminus B_{j}}\frac{\rho(x)}{|x-R_{i}|}\, \dd x\\
 &-z \sum_{\stackrel{i,j=1}{i\neq j}}^{K}\int_{B_{j}}\frac{\rho(x)}{|x-R_{i}|}\, \dd x-z\sum_{j=1}^{K}\int_{B_{j}}\frac{\rho(x)}{|x-R_{j}|}\, \dd x\\
 &=-\int_{\R^{2}}W(x)\rho(x) \,\dd x+\sum_{j=1}^{K}\int_{B_{j}}\left(\frac{ab_{2}}{2}-z\right)\frac{\rho(x)}{|x-R_{j}|}\, \dd x\geq -\int_{\R^{2}}W(x)\rho(x) \, \dd x.
 \end{split}
\end{equation*}
Thus, we can write
\begin{equation}\label{eq:xi_decomp}
\xi(\rho)\geq \xi_{1}(\rho)+\xi_{2}(\rho),
\end{equation}
with
\begin{align*}
 &\xi_{1}(\rho)=b_{1}^{2}\int_{\R^{2}}\rho^{3/2} \, \dd x-\int_{\R^{2}}(W-\Phi)(x)\rho(x) \, \dd x \qquad \mbox{and,}\\
 &\xi_{2}(\rho)=D(\rho,\rho)-\int_{\R^{2}}\Phi(x)\rho(x) \, \dd x+U.
\end{align*}
From the definition of $\xi_1(\rho)$, it is clear that $\xi_1(\rho) \ge \xi_1(\hat \rho)$, where ${\hat \rho}(x) = 4(W(x)-\Phi(x))_+^2/(9b_1^4)$, where as usual $u_+=\max(u,0)$. 
Hence, 
\begin{equation*}
 \xi_{1}(\rho)\geq -\frac{4}{27b_{1}^{4}}\int_{\R^{2}}(W-\Phi)_{+}^{3} \, \dd x=-\frac{4}{27b_{1}^{4}}\sum_{j=1}^{K}\left(\int_{\Gamma_{j}\setminus B_{j}}\frac{z^{3}}{|x-R_{j}|^{3}} \, \dd x+\int_{B_{j}}\left(\frac{ab_{2}}{D_{j}}\right)^{3}\, \dd x\right),
\end{equation*}
where the last equality follows from the definition (\ref{eq:defW}) of $W$.
As every $\Gamma_{j}$ is contained in a half-plane, we may estimate the first integral above with the help of Lemma \ref{lem:halfplane_int}. This way we get
\begin{equation}\label{eq:est_xi_1}
\xi_{1}(\varrho)\geq -\frac{4}{27b_{1}^{4}}\left[2z^{3}(\pi-1)+\pi a^{3}b_{2}^{3}\right]\sum_{j=1}^{K}\frac{1}{D_{j}}.
\end{equation}
The lower bound for $\xi_{2}(\rho)$ follows at once from (\ref{eq:BEI}), i.e., 
\begin{equation}\label{eq:est_xi_2}
 \xi_{2}(\varrho)\geq\frac{z^{2}}{8}\sum_{j=1}^{N}\frac{1}{D_{j}}.
\end{equation}
Putting (\ref{eq:xi_decomp}), (\ref{eq:est_xi_1}), and (\ref{eq:est_xi_2}) together the assertion of the lemma immediately follows. 
\end{proof}

We end this section with the proof of Theorem \ref{theo:main}. 
\begin{proof}[Proof of Theorem \ref{theo:main}]
Let $M(z)$ stand for the term inside the square brackets on the right side of (\ref{eq:main_est}). With $z_c$ and $\sigma$ defined by  by (\ref{eq:s7}) and (\ref{eq:s8}) respectively, 
set $p=z/z_c$, and $b_2=  p \, b \sqrt{1-\sigma}$. Hence, $b_1^2=b^2-b_2^2=b^2(1-p^2+ p^2 \sigma)$. Replacing the expressions of $b_1$ and $b_2$ in 
the expression for $M(z)$ we get, 
\begin{equation}
M(z) = \frac{p^2}{32} a^2\, b^2(1-\sigma) \left[1- \frac{32 \, a}{27 b^3} h(p) \sqrt{1-\sigma} (5 \pi -1) \right],
\label{eq:M1}
\end{equation}
where
$$
h(p) \equiv \frac{p}{(1-p^2+ p^2 \sigma)^2}.
$$
Here, both $p$ and $\sigma$ belong to the interval $[0,1]$. It is simple to see that $h(p)$ is strictly increasing in the interval $[0,1]$, thus $h(p) \le h(1)=1/\sigma^2$. 
Using this last inequality in (\ref{eq:M1}) together with the definition of $\sigma$, i.e.,  equation (\ref{eq:s8}), we conclude that 
$$
M(z) \ge 0, 
$$
for all $z \le z_c$. 
\end{proof}

\section{Proof of Theorem \ref{thm:LO}}

In this Section we give the proof of the main result of this paper, namely Theorem \ref{thm:LO}. We use an idea introduced by Lieb and Thirring in 1975 in their proof of the stability of matter \cite{LiTh75} (see also the review article \cite{Li76} and the recent monograph \cite{LiSe09}).

\begin{proof}[Proof of Theorem \ref{thm:LO}]
Consider the inequality (\ref{eq:s6}), with $K=N$ (where $N$ is the number of electrons in our original system), $z=1$ (i.e., the charge of the electrons), and $R_i=x_i$ (for all $i=1, \dots, N$). With this choice, according to (\ref{eq:s7}), the inequality  (\ref{eq:s6}) is valid as long  as $a$ and $b$ satisfy the constraint, 
\begin{equation}
2 \le a b \sqrt{1-\sigma},
\label{eq:3.1}
\end{equation}
with $\sigma \in (0,1)$ the solution of 
\begin{equation}
\frac{\sigma^2}{\sqrt{1-\sigma}} = \frac{\beta^2}{2} \frac{a}{b^3}
\label{eq:3.2}
\end{equation}
where
\begin{equation}
\beta = \left(\frac{4}{3}\right)^{3/2} \sqrt{5 \pi -1} \simeq 5.9045. 
\label{eq:3.3}
\end{equation}
Then take any normalized wavefunction $\psi(x_1,x_2, \dots, x_N)$, and multiply (\ref{eq:s6})  by $| \psi(x_1, \dots, x_N)|^2$ and integrate  over all the electronic configurations, i.e., on $\R^{2N}$. Moreover, take $\rho=\rho_{\psi}(x)$. We get at once, 
\begin{equation}
 E(\psi)\equiv\langle\psi,\sum_{i<j}^{N}|x_{i}-x_{j}|^{-1}\psi\rangle-D(\rho_{\psi},\rho_{\psi})
 \geq 
 -b^2 \int_{\R^{2}}\rho_{\psi}^{3/2} \, \dd x  - a^2 \int_{\R^{2}}|\nabla\rho_{\psi}^{1/4}|^{2} \, \dd x,
\label{eq:3.4}
\end{equation}
provided $a$ and $b$ satisfy (\ref{eq:3.1}) and (\ref{eq:3.2}) above. 
Thinking of $\sigma \in (0,1)$ as a free parameter, and $a, b$ satisfying (\ref{eq:3.1}) and (\ref{eq:3.2}), and writing $\varepsilon=(1-\sigma)/\sigma$ we get at once from 
(\ref{eq:3.1}) and (\ref{eq:3.2}) that 
$$
b^2 \ge (1+\epsilon) \beta, 
$$
for any $\epsilon>0$.
The theorem then follows by choosing the minimum value of $b^2$, i.e., $b^2=(1+\epsilon)\beta$, hence $a^2=4/(\beta \epsilon)$. 
\end{proof}

\begin{remark}
In general the two integral terms in (\ref{eq:ind_en_est}) are not comparable. If one takes a very rugged $\rho$, normalized to $N$, the gradient term may be very large while the other term can remain small. However, if one takes a smooth $\rho$, the gradient term can be very small as we illustrate in the example below. 
Let us denote
$$
L(\rho)=\int_{\R^2} \rho(x)^{3/2} \, \dd x
$$
and 
$$
G(\rho)=\int_{\R^2} (\nabla \rho(x)^{1/4})^2 \, \dd x.
$$
We will evaluate them for the normal distribution
 $$\rho(|x|)=C\mathrm{e}^{-A|x|^{2}}$$
where $C,\, A>0$. Some straightforward integration yields
 $$L=C^{\frac{3}{2}}\frac{2\pi}{3A},\quad G=C^{\frac{1}{2}}\pi.$$ 
With $C=NA/\pi$,
$$\int_{\R^{2}}\rho(|x|) \, \dd x=N,$$
and we have
$$\frac{G}{L}=\frac{3\pi}{2N},$$
i.e., in the ``large number of particles'' limit, the $G$ term becomes negligible.
\end{remark}

\bigskip

\section*{Acknowledgments}

This work has been supported by the Iniciativa Cient'fica Milenio, ICM (CHILE) project P07--027-F.   The work of RB has also been supported by FONDECYT (Chile) Project 1100679. The work of MT has also been partially supported by the grant 201/09/0811 of the Czech Science Foundation.

We would like to thank Jan Philip Solovej for useful remarks.


\begin{thebibliography}{21}

\bibitem{BeLoSi07}
R.~D.~Benguria, M.~Loss, and H.~Siedentop, 
{\it Stability of atoms and molecules in an ultrarelativistic Thomas--Fermi--Weizs\"acker model}, 
J. Math. Phys. {\bf 49}, article 012302 (2008). 

\bibitem{BeBlLo11}
R.~D.~Benguria, G.~A.~Bley, and M.~Loss, 
{\it An improved estimate on the indirect Coulomb Energy}, 
International Journal of Quantum Chemistry (in press) (2011). 

\bibitem{BePe02}
R.~D.~Benguria and S.~P\'erez-Oyarz\'un, 
{\it The ultrarelativistic Thomas--Fermi von Weizs\"acker model}, 
Journal of Physics A: Math. \& Gen. {\bf 35}, 3409--3414 (2002). 


\bibitem{ChHa99}
G.~K.--L. Chan and N.~C.~Handy, 
{\it Optimized Lieb--Oxford bound for the exchange--correlation energy}, 
Phys. Rev. {\bf A 59}, 3075--3077  (1999). 

\bibitem{Di30}
P.~A.~M.~Dirac,
{\it Note on Exchange Phenomena in the Thomas Atom},
Mathematical Proceedings of the Cambridge Philosophical Society, 
{\bf 26}, 376--385 (1930). 

\bibitem{En87}
E.~Engel, 
{\it Zur relativischen Verallgemeinerung des TFDW modells},  
Ph.D. Thesis Johann Wolfgang Goethe
Universit\"at zu Frankfurt am Main, 1987. 

\bibitem{EnDr87} 
E.~Engel and R.~M.~Dreizler, 
{\it Field--theoretical approach to a relativistic Thomas--Fermi--Weizs\"acker model},
Phys. Rev. A {\bf 35},  3607--3618 (1987). 

\bibitem{EnDr88}
E.~Engel and R.~M.~Dreizler, 
{\it Solution of the relativistic Thomas--Fermi--Dirac--Weizs\"acker model for the case
of neutral atoms and positive ions}, 
Phys. Rev. A {\bf 38},  3909--3917 (1988).

\bibitem{HaSe01}
C.~Hainzl and R.~Seiringer, 
{\it Bounds on One--dimensional Exchange Energies with Applications to Lowest Landau Band Quantum Mechanics},
Letters in Mathematical Physics {\bf 55}, 133--142 (2001).

\bibitem{LePe93}
M.~Levy and J.~P.~Perdew, 
{\it Tight bound and convexity constraint on the exchange--correlation--energy functional in the low--density limit, and other formal  tests of generalized--gradient
approximations}, 
Physical Review B {\bf 48}, 11638--11645 (1993). 

\bibitem{Li76}  
E.~H.~Lieb, \textit{The stability of matter}, 
Rev.\ Mod.\ Phys.\ \textbf{48}, 553--569 (1976).

\bibitem{Li79}
E.~H.~Lieb, 
{\it A Lower Bound for Coulomb Energies},
Physics Letters {\bf 70 A}, 444--446 (1979).


\bibitem{LiOx81}
E.~H.~Lieb and S.~Oxford,
{\it Improved Lower Bound on the Indirect Coulomb Energy},  
International Journal of Quantum Chemistry {\bf 19},  427--439  (1981). 

\bibitem{LiSe09}
E.H.~Lieb and R.~Seiringer, 
{\bf The Stability of Matter in Quantum Mechanics}, 
Cambridge University Press, Cambridge, UK, 2009.

\bibitem{LiSoYn94}
E.~H.~Lieb, J.~P.~Solovej, and J.~Yngvason, 
{\it Quantum dots}, in  {\bf Differential equations and mathematical physics (Birmingham, AL, 1994)}, I. Knowles (Ed.), pp. 
157Ð172, Int. Press, Boston, MA, 1995.

\bibitem{LiSoYn95}
E.~H.~Lieb, J.~P.~Solovej, and J.~Yngvason, 
{\it Ground States of Large Quantum Dots in Magnetic Fields}, 
Physical Review B {\bf 51}, 10646--10666 (1995). 

\bibitem{LiTh75}
E.~H.~Lieb and W.~Thirring, 
{\it Bound for the Kinetic Energy of Fermions which Proves the Stability of Matter}, 
Phys. Rev. Lett. {\bf 35}, 687--689 (1975); Errata {\bf 35}, 1116 (1975). 

\bibitem{LiYa88a}
E.~H.~Lieb and H.--T.~Yau, 
{\it Many--Body Stability Implies a Bound on the Fine--Structure Constant},
Phys. Rev. Lett. {\bf 61}, 1695--1697 (1988).

\bibitem{LiYa88b}
E.~H.~Lieb and H.--T.~Yau, 
{\it The Stability and Instability of Relativistic Matter}, 
Commun. Math. Phys. {\bf 118}, 177--213 (1988).

\bibitem{Mo06}
J.~D.~Morgan~III, 
{\it Thomas--Fermi and other density functional theories}, in 
{\bf Springer handbook of atomic, molecular, and optical physics, vol. 1}, pp. 295--306, edited by G.W.F. Drake, Springer--Verlag, NY, 2006. 


\bibitem{NaPoSo11}
P.--T.~Nam, F.~Portmann,  and J.~P.~Solovej
{\it Asymptotics for two dimensional Atoms}, 
preprint, 2011. 


\bibitem{PeBuEr96}
J.~P.~Perdew, K.~Burke, and M.~Ernzerhof,
{\it Generalized Gradient Approximation Made Simple}, 
Phys. Rev. Letts. {\bf 77}, 3865--3868 (1996).

\bibitem{RaPiCaPr09}
E.~R\"as\"anen, S.~Pittalis, K.~Capelle, and C.~R.~Proetto, 
{\it Lower bounds on the Exchange--Correlation Energy in Reduced Dimensions},
Phys. Rev. Letts. {\bf 102}, article 206406 (2009).



\bibitem{RaSeGo11}
E.~R\"as\"anen, M.~Seidl, and P.~Gori--Giorgi, 
{\it Strictly correlated uniform electron droplets},
Phys. Rev. B {\bf 83}, article 195111 (2011).


\bibitem{StWe71}
E.~M.~Stein and  G.~Weiss, 
{\it Introduction to Fourier Analysis on Euclidean Spaces}, 
Princeton University Press, Princeton, NJ (1971).

\bibitem{VeMeTr09}
A.~Vela, V.~Medel, and S.~B.~Trickey, 
{\it Variable Lieb--Oxford bound satisfaction in a generalized gradient exchange--corelation functional},
The Journal of Chemical Physics {\bf 130}, 244103 (2009). 

\bibitem{Vo07}
G.~Voronoi, 
{\it Nouvelles applications des param\`etres continus \`a la th\'eorie des formes quadratiques}, 
Journal f\"ur die Reine und Angewandte Mathematik {\bf 133} 97--178 (1907).


\end{thebibliography}
\end{document}